\documentclass[aps,prl,reprint,groupedaddress,nofootinbib]{revtex4-2}
\usepackage[T1]{fontenc}
\usepackage{amsmath,amssymb,amsfonts,amsthm,mathtools}
\usepackage{microtype}

\newtheorem{theorem}{Theorem}
\newtheorem{lemma}{Lemma}
\newtheorem{corollary}{Corollary}

\DeclareMathOperator{\Tr}{Tr}
\DeclareMathOperator{\supp}{supp}
\newcommand{\E}{\mathbb{E}}
\newcommand{\Prb}{\mathbb{P}}
\newcommand{\bits}{\{0,1\}}
\newcommand{\dist}{d_{\mathrm H}}
\newcommand{\avg}{\mathrm{avg}}
\newcommand{\opt}{\mathrm{opt}}

\begin{document}

\title{Optimal Average Success Probabilities of Binary $(n,n-1)$ and $(n,n-2)$  Quantum Random Access Codes via a Proof of the Corresponding Conjectured Bound}

\author{Shuo Tan}
\email{shuot8@uci.edu}
\author{Syed A. Jafar}
\email{syed@uci.edu}

\affiliation{Department of Electrical Engineering and Computer Science, University of California, Irvine, California 92697, USA}

\begin{abstract}
A binary $(n,m)$ quantum random access code (QRAC) compresses an $n$-bit classical string into an $m$-qubit quantum state, from which a decoder attempts to recover a randomly selected target bit. Of particular interest is the optimal average probability of success, $P^{Q,\avg,\opt}_{n,m}$, which is numerically conjectured to satisfy the bound $P^{Q,\avg,\opt}_{n,m}\leq \frac{1}{2}+\frac{1}{2}\sqrt{\frac{m}{n}}$. Recent constructions of $(n,n-1)$ QRACs by Suzuki and $(n,n-2)$ QRACs by Akibue et al. meet this bound exactly, raising the question of their strict optimality. In this work, we settle this question by proving the conjectured upper bound for $m\in\{n-1,n-2\}$, thereby  precisely determining $P^{Q,\avg,\opt}_{n,n-1}$ and $P^{Q,\avg,\opt}_{n,n-2}$. The proof utilizes a translation recently studied by Lin and de Wolf from local to global reconstruction via pretty good measurement, along with dimensional and positive-semidefinite constraints on an induced channel.
\end{abstract}

\maketitle
\emph{Introduction.--}
The fundamental limits of data compression are a central theme in quantum information theory. Though Holevo's theorem \cite{Holevo1973} shows that a system of $m$ qubits cannot be used to retrieve more than $m$ bits of classical information, this global limitation does not fully capture how reliably individual bits of a classical string can be accessed when $n$ bits are compressed into $m$ qubits.   Quantum random access codes (QRACs), introduced by Ambainis et al.~\cite{AmbainisNayakTaShmaVazirani1999}, formalize this tradeoff between compressibility and retrievability. In an $(n,m)$ binary QRAC, a classical string $x\in\bits^n$ is encoded into an $m$-qubit state $\rho_x$, from which a decoder attempts to recover a randomly selected target bit as reliably as possible. To avoid degenerate cases we assume throughout that $0<m<n$. A key performance metric is the optimal average probability of success, $P^{Q,\avg,\opt}_{n,m}$.

Quantum Random Access Codes (QRACs) serve as a fundamental primitive in quantum information processing, bridging quantum technologies and theoretical computer science. As analytical tools, they provide rigorous frameworks for proving lower bounds in one-way quantum communication complexity \cite{Klauck2000,Aaronson2004} and constructions and analysis of quantum network coding \cite{hayashi2007quantum}. They also offer powerful techniques for bounding classical primitives in theoretical computer science, including locally decodable codes \cite{KerenidisDeWolf2004} and private information retrieval \cite{WehnerDeWolf2005}. Operationally, QRACs underlie a range of modern quantum information tasks, including the certification of uncharacterized devices via dimension witnesses \cite{BrunnerNavascuesVertesi2013}, semi-device-independent cryptography \cite{PawlowskiBrunner2011,LiEtAl2011,LiPawlowskiYinGuoHan2012}, high-dimensional communication \cite{TavakoliHameediMarquesBourennane2015}, and self-testing \cite{FarkasKaniewski2019}. 

Due to the fundamental significance of QRACs, a precise characterization of $P^{Q,\avg,\opt}_{n,m}$ is highly desirable.
While much progress has been made in the literature through various bounds and QRAC constructions \cite{Nayak1999,AmbainisNayakTaShmaVazirani1999, AmbainisLeungMancinskaOzols2008, ImamichiRaymond2018, FarkasKaniewski2019, MancinskaStorgaard2022,FarkasMiklinTavakoli2025,LinDeWolf2025,Suzuki2026,AkibueRaymondTamakiTeramoto2026,kondo2026random},  exact characterizations of $P^{Q,\avg,\opt}_{n,m}$ are currently known only in a handful of cases, namely $(n,m)=(2,1), (3,1)$ which follow from \cite{AmbainisNayakTaShmaVazirani1999, AmbainisLeungMancinskaOzols2008}, and $(n,m)=(3,2),(4,2),(6,2)$ which follow from \cite{ImamichiRaymond2018, MancinskaStorgaard2022}. For each of these cases, $P^{Q,\avg,\opt}_{n,m}$ has the form,
\begin{align}
P^{Q,\avg,\opt}_{n,m}&= \frac{1}{2}+\frac{1}{2}\sqrt{\frac{m}{n}},\\
\forall (n,m)&\in\{(2,1),(3,1),(3,2),(4,2),(6,2)\}.\notag
\end{align}
Remarkably, the same form also appears in a numerically conjectured general upper bound noted  in \cite{Suzuki2026,AkibueRaymondTamakiTeramoto2026}: 
\begin{equation}
    P^{Q,\avg,\opt}_{n,m}
    \stackrel{?}{\le}
    \frac12+\frac12\sqrt{\frac{m}{n}}.
    \label{eq:conjbound}
\end{equation}
It is known that the bound \eqref{eq:conjbound} holds if $m\in\{1,2\}$, because the bound $P^{Q,\avg,\opt}_{n,m}\leq \frac{1}{2}+\frac{1}{2}\sqrt{\frac{2^{m-1}}{n}}$ is established in \cite{MancinskaStorgaard2022}, and for $m\in\{1,2\}$ we have $2^{m-1}=m$. 

The conjectured bound \eqref{eq:conjbound} is especially intriguing for $m\in\{n-1,n-2\}$, because for these settings there exist QRAC constructions that achieve it. Specifically the $(n,n-1)$ QRACs constructed by Suzuki in \cite{Suzuki2026}, and the $(n,n-2)$ QRACs constructed by Akibue et al. in \cite{AkibueRaymondTamakiTeramoto2026}, achieve average success probability $\frac{1}{2}+\frac{1}{2}\sqrt{\frac{m}{n}}$ for $m=n-1$ and $m=n-2$, respectively. Thus, if the conjectured  bound \eqref{eq:conjbound} can be proved for  $m\in\{n-1,n-2\}$, then it would prove the optimality of both constructions, and thus determine the precise values of $P^{Q,\avg,\opt}_{n,n-1}$ and $P^{Q,\avg,\opt}_{n,n-2}$. Indeed, a proof of \eqref{eq:conjbound} for $m\in\{n-1,n-2\}$ is the main contribution of this work.

Our converse utilizes a translation from local to global reconstruction via  pretty good measurement (PGM), that is studied in \cite{LinDeWolf2025}.  The full-string PGM \cite{HausladenWootters1994, HausladenEtAl1996} produces an entire candidate string \(Y^{\mathfrak{C},\mbox{\tiny PGM}}\) for the  uniform string \(X\) encoded by a QRAC $\mathfrak{C}$.  Lin and de Wolf  in \cite[Theorem 2]{LinDeWolf2025} show that  reducing the outcome of the global PGM to the $i^{th}$ bit by summing over the possible outcomes for other bits, induces exactly the one-bit PGM for the $i^{th}$ bit.

Building on this foundation, we find upper and lower bounds on the expected Hamming distance $\Delta^{\mathfrak{C},\mbox{\tiny PGM}}:=\E[\dist(X,Y^{\mathfrak{C},\mbox{\tiny PGM}})]$. For the upper bound,  if a given QRAC $\mathfrak{C}$ has average success probability
\(P^{\mathfrak{C}}\), then by averaging over one-bit PGMs we show that $\Delta^{\mathfrak{C},\mbox{\tiny PGM}} \leq \frac{n}{2}\left(1-\left(2P^{\mathfrak{C}}-1\right)^2\right)$. For the lower bound, based on the full-string PGM we show that when $m\in\{n-1,n-2\}$, then  $\Delta^{\mathfrak{C},\mbox{\tiny PGM}}\geq \frac{n-m}{2}$.  Combining these bounds  produces the desired proof of \eqref{eq:conjbound} with $m\in\{n-1,n-2\}$ for the QRAC $\mathfrak{C}$. Since the bound holds for any $(n,m)$ QRAC $\mathfrak{C}$ with $m\in\{n-1,n-2\}$, it also holds for the optimal average success probability.

\emph{QRAC Model.--}
Let $[n]=\{1,\ldots,n\}$. A binary $(n,m)$ QRAC $\mathfrak{C}$ is specified as a tuple $(\{\rho_x^{\mathfrak{C}}\}_{x\in\{0,1\}^n}, \{M^{\mathfrak{C}}_{0|i},M^{\mathfrak{C}}_{1|i}\}_{i\in[n]})$ where $\{\rho^{\mathfrak{C}}_x\}_{x\in\{0,1\}^n}$ is an ensemble of quantum states on a Hilbert space of dimension at most $2^m$, and  $\{M^{\mathfrak{C}}_{0|i},M^{\mathfrak{C}}_{1|i}\}$ is a binary decoding POVM for the $i^{th}$ bit. The encoder obtains a uniformly random binary string $X\in\{0,1\}^n$ and maps any realization $X=x=x_1x_2\cdots x_n$ to the corresponding quantum state $\rho_x^{\mathfrak{C}}$. The decoder is given the encoded quantum state, along with an independently generated uniformly random index $\mathcal{I}\in[n]$. Given $\rho^{\mathfrak{C}}_x$ and the realization $\mathcal{I}=i$, the decoder produces $Y_i^{\mathfrak{C}}$ as the measurement outcome of the POVM $\{M^{\mathfrak{C}}_{0|i},M^{\mathfrak{C}}_{1|i}\}$ applied to $\rho^{\mathfrak{C}}_x$. Define the coordinate-wise success probability of $\mathfrak{C}$ as
\begin{align}
p_i^{\mathfrak{C}}&:=\Prb(Y_i^{\mathfrak{C}}=X_i\mid \mathcal{I}=i)\notag\\
&= \frac{1}{2^n}\sum_{x\in\{0,1\}^n}\Prb(Y_i^{\mathfrak{C}}=X_i\mid X=x, \mathcal{I}=i)\notag\\
&=\frac{1}{2^n}\sum_{x\in\{0,1\}^n}\Tr\!\left(\rho^{\mathfrak{C}}_x M^{\mathfrak{C}}_{x_i|i}\right),\label{eq:coordinate-success}\\
\intertext{and the average success probability (ASP) of  $\mathfrak{C}$ as}
P^{\mathfrak{C}}&:=\frac{1}{n}\sum_{i=1}^n p^{\mathfrak{C}}_i.
\end{align}
Define the coordinate-wise success probability of $\mathfrak{C}$, optimized over all  POVMs for the $i^{th}$ bit decoding, as
\begin{align}
p_i^{\mathfrak{C}*}&:= \max_{\mbox{\tiny POVM}:\{M_{0|i},M_{1|i}\}}\frac{1}{2^n}\sum_{x\in\{0,1\}^n}\Tr\!\left(\rho^{\mathfrak{C}}_x M_{x_i|i}\right)\label{eq:POVMopt}
\end{align}
The \emph{optimal} ASP for the entire class of $(n,m)$ QRACs, $P^{Q,\avg,\opt}_{n,m}$ is defined as in \cite{Suzuki2026},
\begin{align}
&P^{Q,\avg,\opt}_{n,m}
    :=\max_{\mathfrak{C}}P^{\mathfrak{C}}\\
&=    \max_{\{\rho_x\},\{M_{b|i}\}}
    \frac{1}{n2^n}
    \sum_{x\in\bits^n}
    \sum_{i=1}^n
    \Tr\!\left(\rho_x M_{x_i|i}\right).
    \label{eq:opt-avg-success}
\end{align}
\emph{Full PGM and the Full PGM Channel.--}
The full-string PGM is defined from the average state
\begin{align}
\bar\rho^{\mathfrak{C}}:=2^{-n}\sum_{x\in\{0,1\}^n}\rho^{\mathfrak{C}}_x
    \label{eq:average-state}
\end{align}
as in \cite{LinDeWolf2025} with uniform $X$ as follows.
\begin{equation}
    Q_y^{\mathfrak{C}}=2^{-n}(\bar\rho^{\mathfrak{C}})^{-1/2}\rho^{\mathfrak{C}}_y(\bar\rho^{\mathfrak{C}})^{-1/2},
    \qquad y\in\bits^n,
    \label{eq:pgm-def}
\end{equation}
where inverse powers are taken on $\supp(\bar\rho^{\mathfrak{C}})$ and are zero on the
orthogonal complement.  Then 
$\sum_y Q^{\mathfrak{C}}_y=\Pi_{\supp(\bar\rho^{\mathfrak{C}})}$ is the orthogonal projection matrix onto $\supp(\bar\rho^{\mathfrak{C}})$, and 
$\{Q^{\mathfrak{C}}_y\}_y$ is a POVM on $\supp(\bar\rho^{\mathfrak{C}})$; if desired as a POVM on the whole ambient Hilbert space, it may be completed arbitrarily on
$\supp(\bar\rho^{\mathfrak{C}})^\perp$. This completion does not affect the probabilities
below, because all code states are supported on $\supp(\bar\rho^{\mathfrak{C}})$.  Let $Y^{\mathfrak{C},\mbox{\tiny PGM}}\in\{0,1\}^n$ denote the PGM output when the input is $X\in\{0,1\}^n$. 

For $x,y\in\bits^n$, write
$$\dist(x,y)=\bigl|\{i\in[n]:x_i\ne y_i\}\bigr|$$ for their Hamming distance.  For the random pair \((X,Y^{\mathfrak{C},\mbox{\tiny PGM}})\), define
\begin{align}
    \Delta^{\mathfrak{C},\mbox{\tiny PGM}}:=\E[\dist(X,Y^{\mathfrak{C},\mbox{\tiny PGM}})].
\end{align}

The full PGM induces a classical channel $T_{xy}^{\mathfrak{C},\mbox{\tiny PGM}}$ with input $x\in\{0,1\}^n$, output $y\in\{0,1\}^n$, and the transition probabilities,
\begin{equation}
    T_{xy}^{\mathfrak{C},\mbox{\tiny PGM}}=\Prb[Y^{\mathfrak{C},\mbox{\tiny PGM}}=y\mid X=x]=\Tr(\rho^{\mathfrak{C}}_xQ^{\mathfrak{C}}_y).
    \label{eq:T-def}
\end{equation}

\emph{Useful Lemmas.--} We start with three useful lemmas. The first lemma produces an upper bound on the expected Hamming distance, $\Delta^{\mathfrak{C},\mbox{\tiny PGM}}$, between $X$ and the full PGM output $Y$.
\begin{lemma}%[Average-success PGM bridge]
\label{lem:Deltaup}
For the full-string PGM output \(Y^{\mathfrak{C},\mbox{\tiny PGM}}\),
\begin{align}
%\Delta^{\mathfrak{C},\mbox{\tiny PGM}} \le 2nP^{\mathfrak{C}}(1-P^{\mathfrak{C}}).
\Delta^{\mathfrak{C},\mbox{\tiny PGM}} \leq\frac{n}{2}\left(1-(2P^{\mathfrak{C}}-1)^2\right).
\label{eq:Deltaup}
\end{align}
\end{lemma}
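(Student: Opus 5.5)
The plan is to decompose $\Delta^{\mathfrak{C},\mbox{\tiny PGM}}$ coordinate by coordinate, bound each coordinate's error by a quadratic function of $p_i^{\mathfrak{C}}$, and then average using concavity. By linearity of expectation,
\[
\Delta^{\mathfrak{C},\mbox{\tiny PGM}}=\sum_{i=1}^n \Prb\bigl(Y^{\mathfrak{C},\mbox{\tiny PGM}}_i\neq X_i\bigr).
\]
By \cite[Theorem 2]{LinDeWolf2025}, the marginal of the full-string PGM on coordinate $i$ is exactly the one-bit PGM for bit $i$. Write $\sigma_b^{(i)}:=2^{-n}\sum_{x:x_i=b}\rho^{\mathfrak{C}}_x$ for $b\in\{0,1\}$. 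Then $\sigma_0^{(i)}+\sigma_1^{(i)}=\bar\rho:=\bar\rho^{\mathfrak{C}}$, and the marginal POVM is $\{\bar\rho^{-1/2}\sigma_b^{(i)}\bar\rho^{-1/2}\}_b$. Set $D_i:=\sigma_0^{(i)}-\sigma_1^{(i)}$, so that $\sigma_b^{(i)}=\tfrac12(\bar\rho\pm D_i)$. Expanding gives the one-bit PGM success probability as
\[
\sum_b \Tr\bigl(\sigma_b^{(i)}\bar\rho^{-1/2}\sigma_b^{(i)}\bar\rho^{-1/2}\bigr)=\tfrac12+\tfrac12\Tr\bigl(D_i\bar\rho^{-1/2}D_i\bar\rho^{-1/2}\bigr).
\]
The cross terms cancel, and $\Tr(\bar\rho\,\bar\rho^{-1/2}\bar\rho\,\bar\rho^{-1/2})=\Tr\bar\rho=1$. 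By Helstrom, $p_i^{\mathfrak{C}*}=\tfrac12+\tfrac12\|D_i\|_1$.

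The key step, which I expect to be the main technical point, is the inequality
\[
\Tr\bigl(D_i\bar\rho^{-1/2}D_i\bar\rho^{-1/2}\bigr)\ \ge\ \|D_i\|_1^2 .
\]
I would prove it by Cauchy--Schwarz. The operator $D_i$ is supported on $\supp\bar\rho$, since both $\sigma_b^{(i)}\le\bar\rho$. Take a unitary (sign) $U$ commuting with $D_i$, with $\|D_i\|_1=\Tr(UD_i)$, chosen to act within $\supp\bar\rho$. Then write
\[
\Tr(UD_i)=\Tr\bigl[(\bar\rho^{1/4}U\bar\rho^{1/4})(\bar\rho^{-1/4}D_i\bar\rho^{-1/4})\bigr].
\]
Hilbert--Schmidt Cauchy--Schwarz gives
\[
\|D_i\|_1^2\le \Tr\bigl(\bar\rho^{1/2}U\bar\rho^{1/2}U^\dagger\bigr)\,\Tr\bigl((\bar\rho^{-1/4}D_i\bar\rho^{-1/4})^2\bigr).
\]
The first factor is at most $\|\bar\rho^{1/2}U\|_2\,\|U\bar\rho^{1/2}\|_2=\Tr\bar\rho=1$, again by Cauchy--Schwarz. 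The second factor is exactly $\Tr(D_i\bar\rho^{-1/2}D_i\bar\rho^{-1/2})$. Hence
\[
\Prb\bigl(Y_i^{\mathfrak{C},\mbox{\tiny PGM}}\ne X_i\bigr)\le \tfrac12\bigl(1-(2p_i^{\mathfrak{C}*}-1)^2\bigr).
\]

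Next, since $p_i^{\mathfrak{C}*}\ge\max(p_i^{\mathfrak{C}},1-p_i^{\mathfrak{C}})$ (flip the outcome labels of the given POVM), we have $(2p_i^{\mathfrak{C}*}-1)^2\ge(2p_i^{\mathfrak{C}}-1)^2$. Therefore each term is at most $\tfrac12\bigl(1-(2p_i^{\mathfrak{C}}-1)^2\bigr)=2p_i^{\mathfrak{C}}(1-p_i^{\mathfrak{C}})$.

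Finally, summing over $i$ and applying Jensen's inequality to the concave function $p\mapsto 2p(1-p)$ yields
\[
\sum_i 2p_i^{\mathfrak{C}}(1-p_i^{\mathfrak{C}})\le 2nP^{\mathfrak{C}}(1-P^{\mathfrak{C}})=\tfrac n2\bigl(1-(2P^{\mathfrak{C}}-1)^2\bigr),
\]
which is \eqref{eq:Deltaup}. The only delicate bookkeeping is restricting all operators to $\supp\bar\rho$, so that the inverse square roots are well defined.
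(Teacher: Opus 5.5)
Your proposal is correct and follows the paper's proof essentially step for step: the coordinate-wise decomposition of $\Delta^{\mathfrak{C},\mbox{\tiny PGM}}$, the identification of each marginal with the one-bit PGM, the bound $\Prb[Y_i^{\mathfrak{C},\mbox{\tiny PGM}}=X_i]\ge (p_i^{\mathfrak{C}*})^2+(1-p_i^{\mathfrak{C}*})^2$, the label-flip comparison $p_i^{\mathfrak{C}*}\ge\max\{p_i^{\mathfrak{C}},1-p_i^{\mathfrak{C}}\}$, and Jensen. The only difference is the one-bit step: the paper cites Renes's refined PGM bound, whereas you derive that same inequality (note $\tfrac12+\tfrac12\|D_i\|_1^2=(p_i^{\mathfrak{C}*})^2+(1-p_i^{\mathfrak{C}*})^2$) from the identity $\tfrac12+\tfrac12\Tr(D_i\bar\rho^{-1/2}D_i\bar\rho^{-1/2})$ and a $\bar\rho^{\pm1/4}$-weighted Cauchy--Schwarz argument, which correctly makes the lemma self-contained.
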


\begin{proof}
From Lin and de Wolf \cite[Proof of Theorem~2]{LinDeWolf2025}, let us recall that reducing the outcome of the global PGM to the $i^{th}$ bit by summing over the possible outcomes for other bits, induces exactly the one-bit PGM for the $i^{th}$ bit. Thus, the $i^{th}$ coordinate of the full-string PGM output $Y^{\mathfrak{C},\mbox{\tiny PGM}}$, namely  $Y_i^{\mathfrak{C},\mbox{\tiny PGM}}$, represents the output of a one-bit PGM for the $i^{th}$ bit from the  binary marginal ensemble
\(\{(1/2,\sigma^{\mathfrak{C}}_{i,0}),(1/2,\sigma^{\mathfrak{C}}_{i,1})\}\), where
\begin{align}
\sigma^{\mathfrak{C}}_{i,b}:=\frac{1}{2^{n-1}}\sum_{x\in\{0,1\}^n: x_i=b}\rho_x^{\mathfrak{C}},\qquad b\in\{0,1\}.
\end{align}
This allows us to apply Renes's refined PGM bound \cite[Eq.~(12)]{Renes2017} for a \emph{one-bit} PGM to obtain,
\begin{align}
\Prb[Y_i^{\mathfrak{C},\mbox{\tiny PGM}}=X_i]&\geq (p_i^{\mathfrak{C}*})^2+(1-p_i^{\mathfrak{C}*})^2\label{eq:renes}
\end{align}
The optimized POVM in \eqref{eq:POVMopt} is at least as good as choosing the better of the  POVM prescribed by $\mathfrak{C}$ and its swapped version, i.e.,
\begin{align}
p_i^{\mathfrak{C}*}\geq \max\{p_i^{\mathfrak{C}},1-p_i^{\mathfrak{C}}\}\geq 1/2.\label{eq:swap}
\end{align}
Recognizing that the function $z^2+(1-z)^2$ is increasing for $z\geq 1/2$, we combine \eqref{eq:renes} and \eqref{eq:swap} to obtain,
\begin{align}
&\Prb[Y_i^{\mathfrak{C},\mbox{\tiny PGM}}=X_i]\notag\\
&\geq (\max\{p_i^{\mathfrak{C}},1-p_i^{\mathfrak{C}}\})^2+(1-\max\{p_i^{\mathfrak{C}},1-p_i^{\mathfrak{C}}\})^2\\
&= (p_i^{\mathfrak{C}})^2+(1-p_i^{\mathfrak{C}})^2\label{eq:Crenes}\\
\intertext{which is equivalently expressed as}
&\Prb[Y_i^{\mathfrak{C},\mbox{\tiny PGM}}\neq X_i]\leq 2p_i^{\mathfrak{C}}(1-p_i^{\mathfrak{C}}).
\end{align}
Summing over all $i\in[n]$ yields,
\begin{align}
\Delta^{\mathfrak{C},\mbox{\tiny PGM}}&=\E[\dist(X,Y^{\mathfrak{C},\mbox{\tiny PGM}})]\notag\\
&=\sum_{i\in[n]}\Prb[Y_i^{\mathfrak{C},\mbox{\tiny PGM}}\neq X_i]\\
&\leq \sum_{i\in[n]}2p_i^{\mathfrak{C}}(1-p_i^{\mathfrak{C}})\\
&\leq 2nP^{\mathfrak{C}}(1-P^{\mathfrak{C}}).\label{eq:Jensens}
\end{align}
Step \eqref{eq:Jensens} follows from an application of Jensen's inequality, recognizing that $z(1-z)$ is a concave function of $z$. Finally, rearranging the bound $\Delta^{\mathfrak{C},\mbox{\tiny PGM}}\leq 2nP^{\mathfrak{C}}(1-P^{\mathfrak{C}})$ yields the desired bound \eqref{eq:Deltaup}.

\end{proof}
The second lemma establishes that the matrix of transition probabilities for the channel $T^{\mathfrak{C},\mbox{\tiny PGM}}$ defined in \eqref{eq:T-def} is positive-semidefinite.
\begin{lemma}
\label{lem:channel-structure}
The matrix $T^{\mathfrak{C},\mbox{\tiny PGM}}$ is 
positive-semidefinite.  
\end{lemma}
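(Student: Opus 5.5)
We plan to show that $T^{\mathfrak{C},\mbox{\tiny PGM}}$ is a Gram matrix. Write $\bar\rho=\bar\rho^{\mathfrak{C}}$ and $\rho_x=\rho^{\mathfrak{C}}_x$. By \eqref{eq:pgm-def} and \eqref{eq:T-def},
\begin{align*}
T_{xy}^{\mathfrak{C},\mbox{\tiny PGM}}=2^{-n}\Tr\!\left(\rho_x\,\bar\rho^{-1/2}\rho_y\bar\rho^{-1/2}\right).
\end{align*}
Define the operators $A_x:=\bar\rho^{-1/4}\rho_x\bar\rho^{-1/4}$, with the inverse powers taken on $\supp(\bar\rho)$ as before. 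Each $A_x$ is Hermitian and positive semidefinite.

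The key step is to rewrite each entry using cyclicity of the trace and the fact that every $\rho_x$ is supported on $\supp(\bar\rho)$. Since $\rho_x\le 2^n\bar\rho$, the support of $\rho_x$ lies in $\supp(\bar\rho)$, so $\Pi\rho_x\Pi=\rho_x$, where $\Pi=\bar\rho^{1/4}\bar\rho^{-1/4}$ is the projector onto $\supp(\bar\rho)$. This gives
\begin{align*}
\Tr\!\left(\rho_x\bar\rho^{-1/2}\rho_y\bar\rho^{-1/2}\right)
=\Tr\!\left(\bar\rho^{-1/4}\rho_x\bar\rho^{-1/4}\,\bar\rho^{-1/4}\rho_y\bar\rho^{-1/4}\right)
=\Tr(A_xA_y)=\langle A_x,A_y\rangle_{\mathrm{HS}}.
\end{align*}
Here the first equality moves one factor $\bar\rho^{-1/4}$ from the far right to the front by cyclicity and splits each $\bar\rho^{-1/2}$ into two factors $\bar\rho^{-1/4}$. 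Hence $T^{\mathfrak{C},\mbox{\tiny PGM}}=2^{-n}G$, where $G_{xy}=\Tr(A_x^\dagger A_y)$ is the Gram matrix of the vectors $A_x$ in the real Hilbert–Schmidt space of Hermitian operators.

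Positive semidefiniteness then follows directly. For any $c\in\mathbb{C}^{2^n}$,
\begin{align*}
\sum_{x,y}\bar c_x c_y G_{xy}=\Bigl\|\sum_y c_yA_y\Bigr\|_{\mathrm{HS}}^2\ge 0.
\end{align*}
Symmetry of $T$ also follows, since $\Tr(A_xA_y)=\Tr(A_yA_x)$ and the value is real.

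There is no real obstacle here. The only care needed is with supports when $\bar\rho$ is singular: we must check that the pseudo-inverse splitting $\bar\rho^{-1/2}=\bar\rho^{-1/4}\bar\rho^{-1/4}$ is valid on the support, and that no cross terms appear from $\supp(\bar\rho)^\perp$. Both hold because all $\rho_x$ are supported in $\supp(\bar\rho)$. As a byproduct, $G$ is a Gram matrix of $2^n$ vectors lying in the space of Hermitian operators on the support, which has real dimension at most $4^m$. This gives $\operatorname{rank}T\le 4^m$, a dimensional constraint presumably used later in the paper.
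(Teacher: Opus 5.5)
Your proposal is correct and is essentially the paper's proof: the paper defines the same operators $A_x^{\mathfrak{C}}=(\bar\rho^{\mathfrak{C}})^{-1/4}\rho_x^{\mathfrak{C}}(\bar\rho^{\mathfrak{C}})^{-1/4}$, uses cyclicity of the trace to get $T_{xy}^{\mathfrak{C},\mbox{\tiny PGM}}=2^{-n}\Tr(A_x^{\mathfrak{C}}A_y^{\mathfrak{C}})$, and concludes positive semidefiniteness from the Gram-matrix structure in the Hilbert--Schmidt inner product. Your care with supports is harmless but not actually needed, because the pseudo-inverse identity $\bar\rho^{-1/2}=\bar\rho^{-1/4}\bar\rho^{-1/4}$ holds unconditionally. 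Your rank bound $\operatorname{rank}T\le 4^m$ is a valid aside, but the paper does not use it: its later dimensional constraint is Nayak's bound $\Prb[Y=X]\le 2^{m-n}$.
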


\begin{proof}
Define
\begin{align}
    A_x^{\mathfrak{C}}:=(\bar\rho^{\mathfrak{C}})^{-1/4}\rho_x^{\mathfrak{C}}(\bar\rho^{\mathfrak{C}})^{-1/4} 
\end{align}
and note that the cyclicity of trace implies
\begin{align}
    T_{xy}^{\mathfrak{C},\mbox{\tiny PGM}}&=\Tr(\rho^{\mathfrak{C}}_xQ^{\mathfrak{C}}_y)\\
    &=\Tr(\rho^{\mathfrak{C}}_x2^{-n}(\bar\rho^{\mathfrak{C}})^{-1/2}\rho^{\mathfrak{C}}_y(\bar\rho^{\mathfrak{C}})^{-1/2})\\
 &=2^{-n}\Tr(A_x^{\mathfrak{C}}A_y^{\mathfrak{C}})
    \label{eq:T-gram}
\end{align}
Thus, $T_{xy}^{\mathfrak{C},\mbox{\tiny PGM}}=2^{-n}\Tr(A_x^{\mathfrak{C}}A_y^{\mathfrak{C}})$ is a real symmetric Gram matrix in the Hilbert-Schmidt inner product, which makes it positive semidefinite. 

\end{proof}

The third lemma bounds the probability that the Hamming distance between $X$ and the full PGM output is odd, essentially by showing that for any $n$-bit random strings $X$ and $Y$, if the matrix representation of their joint probability mass function is positive semidefinite, then the Hamming distance between $X$ and $Y$ is not more likely to be odd than it is to be even.
\begin{lemma}
\label{lem:parity-constraint}
We have the following bound,
\begin{equation}
    \Prb[\dist(X,Y^{\mathfrak{C},\mbox{\tiny PGM}})\text{ is odd}]\le \frac{1}{2}.
    \label{eq:odd-bound}
\end{equation}

\end{lemma}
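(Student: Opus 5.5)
We express the odd-parity probability as a quadratic form of the joint distribution matrix against a $\pm1$ parity vector, and then use the positive semidefiniteness from Lemma~\ref{lem:channel-structure}.

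Since $X$ is uniform, the joint probability mass function is
\[
J_{xy}:=\Prb[X=x,Y^{\mathfrak{C},\mbox{\tiny PGM}}=y]=2^{-n}T^{\mathfrak{C},\mbox{\tiny PGM}}_{xy}.
\]
By Lemma~\ref{lem:channel-structure}, $J$ is a real symmetric positive-semidefinite $2^n\times 2^n$ matrix. Its entries sum to $1$, because the full PGM is complete on $\supp(\bar\rho^{\mathfrak{C}})$ and every code state is supported there.

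Introduce the parity vector $v\in\{\pm1\}^{2^n}$ with $v_x:=(-1)^{|x|}$, where $|x|$ is the Hamming weight of $x$. The key identity is that the parity of $\dist(x,y)$ equals the parity of $|x|+|y|$, since $x\oplus y$ has weight $|x|+|y|-2|x\wedge y|$. Hence
\[
(-1)^{\dist(x,y)}=v_xv_y .
\]

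Taking expectations of this identity gives
\[
\Prb[\dist \text{ even}]-\Prb[\dist\text{ odd}]=\E\bigl[(-1)^{\dist(X,Y^{\mathfrak{C},\mbox{\tiny PGM}})}\bigr]=\sum_{x,y}J_{xy}v_xv_y=v^{\top}Jv\ \ge 0,
\]
where the last inequality is positive semidefiniteness. Combining this with $\Prb[\text{even}]+\Prb[\text{odd}]=1$ yields $\Prb[\text{odd}]\le 1/2$, which is \eqref{eq:odd-bound}.

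Equivalently, in the Gram form \eqref{eq:T-gram}, $v^{\top}Jv=4^{-n}\bigl\|\sum_x (-1)^{|x|}A^{\mathfrak{C}}_x\bigr\|_{HS}^2\ge 0$, which gives an explicit certificate.

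There is no real obstacle here. The only point needing care is the parity identity relating $\dist(x,y)$ to $|x|+|y|$. One should also note that positive semidefiniteness is used on the matrix with the uniform prior factored in. Because the prior is uniform, this is just the scalar multiple $2^{-n}$ of $T^{\mathfrak{C},\mbox{\tiny PGM}}$, so the property holds without further work.
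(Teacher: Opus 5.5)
Your proposal is correct and follows the paper's proof essentially step for step: both apply the positive semidefiniteness of $T^{\mathfrak{C},\mbox{\tiny PGM}}$ from Lemma~\ref{lem:channel-structure} to the parity vector $s(x)=(-1)^{|x|}$, use the uniform prior to turn the quadratic form into $\E\bigl[(-1)^{\dist(X,Y^{\mathfrak{C},\mbox{\tiny PGM}})}\bigr]$, and then rearrange to get $1-2\Prb[\text{odd}]\ge 0$. Your explicit Hilbert--Schmidt certificate $4^{-n}\bigl\|\sum_x(-1)^{|x|}A^{\mathfrak{C}}_x\bigr\|_{HS}^2$ is a correct and pleasant extra that makes the nonnegativity transparent.
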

\begin{proof}

Let $s: \{0,1\}^n \to \{-1, 1\}$ be defined by $s(x) = (-1)^{|x|}$, where $|x|$ denotes the Hamming weight of $x$.  Using the fact established in Lemma \ref{lem:channel-structure}, i.e.,
$T^{\mathfrak{C},\mbox{\tiny PGM}}\succeq 0$, we have
\begin{align}
    0
    &\le 2^{-n}\sum_{x,y}s(x)T_{xy}^{\mathfrak{C},\mbox{\tiny PGM}}s(y)\\
    &=2^{-n}\sum_{x,y}s(x)\mathbb{P}[Y^{\mathfrak{C}, \mbox{\tiny PGM}}=y\mid X=x]s(y)\label{eq:useTdef}\\
   &=\sum_{x,y}s(x)\mathbb{P}[Y^{\mathfrak{C},\mbox{\tiny PGM}}=y, X=x]s(y)\label{eq:Xuni}\\
    &= \E[s(X)s(Y^{\mathfrak{C},\mbox{\tiny PGM}})]\\
    &= \E[(-1)^{|X|+|Y^{\mathfrak{C},\mbox{\tiny PGM}}|}]\label{eq:uses}\\
    &= \E[(-1)^{\dist(X,Y^{\mathfrak{C},\mbox{\tiny PGM}})}]\label{eq:ss2H}\\
    &= \Prb[\dist(X,Y^{\mathfrak{C},\mbox{\tiny PGM}})\text{ is even}]\notag\\
&\hspace{1cm}       -\Prb[\dist(X,Y^{\mathfrak{C},\mbox{\tiny PGM}})\text{ is odd}]\\
&=1-2\Prb[\dist(X,Y^{\mathfrak{C},\mbox{\tiny PGM}})\text{ is odd}]\label{eq:Hodd}
\end{align}
Step \eqref{eq:useTdef} follows from \eqref{eq:T-def}. Step \eqref{eq:Xuni} follows from the uniform distribution on $X$. Step \eqref{eq:ss2H} follows from the fact that $(|x|+|y|)\mod 2$  is equal to $\dist(x,y)\mod 2$. Re-arranging  \eqref{eq:Hodd} yields  the desired bound \eqref{eq:odd-bound}.%
\end{proof}

\emph{Main Result.--}
Our main result appears in the following theorem.
\begin{theorem}
\label{thm:main-converse}
For \(m\in\{n-1,n-2\}\), every binary \((n,m)\) QRAC $\mathfrak{C}$  has average success probability bounded as
\begin{equation}
    P^{\mathfrak{C}}\le \frac12+\frac12\sqrt{\frac{m}{n}} .
    \label{eq:main-converse}
\end{equation}
Consequently Eq.~\eqref{eq:conjbound} holds for 
$P^{Q,\avg,\opt}_{n,m}$.
\end{theorem}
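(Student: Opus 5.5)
The plan is to complement Lemma~\ref{lem:Deltaup} with a matching \emph{lower} bound $\Delta^{\mathfrak{C},\mbox{\tiny PGM}}\ge \frac{n-m}{2}$ valid for $m\in\{n-1,n-2\}$, and then combine the two. Combining is routine: from
\begin{align*}
\frac{n-m}{2}\le \Delta^{\mathfrak{C},\mbox{\tiny PGM}}\le \frac n2\bigl(1-(2P^{\mathfrak{C}}-1)^2\bigr)
\end{align*}
we get $(2P^{\mathfrak{C}}-1)^2\le m/n$, hence $P^{\mathfrak{C}}\le\frac12+\frac12\sqrt{m/n}$. If $P^{\mathfrak{C}}<1/2$ the bound is trivial anyway. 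Maximizing over $\mathfrak{C}$ then gives \eqref{eq:conjbound}.

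For the lower bound, I write $p_0:=\Prb[\dist(X,Y^{\mathfrak{C},\mbox{\tiny PGM}})=0]$, $p_{\rm odd}$ and $p_{\rm even}=1-p_{\rm odd}$ for the parity probabilities. A distance of zero contributes $0$, odd distances contribute at least $1$, and even nonzero distances contribute at least $2$. Therefore
\begin{align*}
\Delta^{\mathfrak{C},\mbox{\tiny PGM}}\ge p_{\rm odd}+2(p_{\rm even}-p_0)=2-p_{\rm odd}-2p_0 .
\end{align*}
Lemma~\ref{lem:parity-constraint} gives $p_{\rm odd}\le 1/2$, so $\Delta^{\mathfrak{C},\mbox{\tiny PGM}}\ge \frac32-2p_0$.

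The remaining ingredient is the dimensional constraint $p_0\le 2^{m-n}$. Here $p_0=2^{-n}\sum_x\Tr(\rho_x^{\mathfrak{C}}Q_x^{\mathfrak{C}})$ is the success probability of a measurement identifying a uniformly chosen $x$ among $2^n$ states in dimension at most $2^m$. Using $\rho_x^{\mathfrak{C}}\preceq I$, we get $p_0\le 2^{-n}\sum_x\Tr Q_x^{\mathfrak{C}}=2^{-n}\Tr\Pi_{\supp(\bar\rho^{\mathfrak{C}})}\le 2^{m-n}$. Equivalently, this is the trace of the diagonal of $T^{\mathfrak{C},\mbox{\tiny PGM}}$ divided by $2^n$.

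Plugging in, for $m=n-1$ we have $p_0\le 1/2$ and $\Delta^{\mathfrak{C},\mbox{\tiny PGM}}\ge \frac12=\frac{n-m}{2}$. For $m=n-2$ we have $p_0\le1/4$ and $\Delta^{\mathfrak{C},\mbox{\tiny PGM}}\ge 1=\frac{n-m}{2}$.

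The main obstacle is conceptual rather than computational. The right lower bound must use exactly two facts: parity, from positive semidefiniteness (Lemma~\ref{lem:parity-constraint}), and the dimension bound on $p_0$. The accounting by distance classes $0$, odd, and even $\ge2$ is what makes these two facts suffice. This also explains why the argument stops at $n-m\le 2$: for larger $n-m$, the even classes at distance $\ge4$ would need finer constraints than parity and $p_0$ alone provide. I would double-check that $p_0\le 2^{m-n}$ holds with the PGM completed arbitrarily off the support, which it does, since all code states live on $\supp(\bar\rho^{\mathfrak{C}})$.
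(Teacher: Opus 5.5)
Your proposal is correct and follows essentially the same route as the paper: the same split of $\dist(X,Y^{\mathfrak{C},\mbox{\tiny PGM}})$ into the zero, odd, and nonzero-even classes, combined with Lemma~\ref{lem:parity-constraint}, the dimension bound $p_0\le 2^{m-n}$, and Lemma~\ref{lem:Deltaup}. The differences are minor: you derive $p_0\le 2^{m-n}$ directly from $\rho_x^{\mathfrak{C}}\preceq I$ and $\Tr\Pi_{\supp(\bar\rho^{\mathfrak{C}})}\le 2^m$ rather than citing Nayak, and you apply the parity bound to both cases, whereas the paper handles $m=n-1$ using only $\Delta^{\mathfrak{C},\mbox{\tiny PGM}}\ge\Prb[Y^{\mathfrak{C},\mbox{\tiny PGM}}\neq X]$.
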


\begin{proof}
By relabeling the two outcomes of each coordinate POVM if necessary, we may
assume without loss of generality that $p_i^{\mathfrak{C}}\ge 1/2$ for all $i$.  This is because relabeling leaves the
encoding ensemble, and hence the full-string PGM channel and
\(\Delta^{\mathfrak{C},\mbox{\tiny PGM}}=\E[\dist(X,Y^{\mathfrak{C},\mbox{\tiny PGM}})]\), unchanged, while it can only increase the ASP. 

It will be useful to recall Nayak's result in \cite[Theorem 2.4, Part 2]{Nayak1999} which directly implies that $\mathbb{P}[Y=X]\leq 2^{m-n}$ if $Y\in\{0,1\}^n$ is obtained by making \emph{any} measurement (not restricted to PGM) of an $m$-qubit encoding of a uniform $X\in\{0,1\}^n$. In particular, this implies
\begin{align}
\Prb[Y^{\mathfrak{C},\mbox{\tiny PGM}}=X]\le 2^{m-n}.\label{eq:genbound}
\end{align}

Case 1: For $m=n-1$ we have,
\begin{align}
\Delta^{\mathfrak{C},\mbox{\tiny PGM}}&=\sum_{i\in[n]}\Prb[Y_i^{\mathfrak{C},\mbox{\tiny PGM}}\neq X_i]\\
&\geq \mathbb{P}[Y^{\mathfrak{C},\mbox{\tiny PGM}}\neq X]\\
&\geq 1/2\label{eq:applygenbound}
\end{align}
Step \eqref{eq:applygenbound} follows by applying \eqref{eq:genbound} with $m=n-1$. 

Case 2: Now consider $m=n-2$. For compact notation let us define $q_0:=\mathbb{P}[\dist(X,Y^{\mathfrak{C},\mbox{\tiny PGM}})=0]$ and $q_{\rm odd}:=\mathbb{P}[\dist(X,Y^{\mathfrak{C},\mbox{\tiny PGM}})\mbox{ is odd}]$.
\begin{align}
\Delta^{\mathfrak{C},\mbox{\tiny PGM}}&=\E[\dist(X,Y^{\mathfrak{C},\mbox{\tiny PGM}})]\\
&\geq \mathbb{P}[\dist(X,Y^{\mathfrak{C},\mbox{\tiny PGM}})\mbox{ is odd}]\label{eq:evenoddbound}\\
&\hspace{1cm}+2\mathbb{P}[\dist(X,Y^{\mathfrak{C},\mbox{\tiny PGM}})\mbox{ is non-zero and even}]\notag\\
&=q_{\rm odd}+2(1-q_0-q_{\rm odd})\\
&=2 - 2q_0- q_{\rm odd}\\
&\geq 2 - 2(1/4)-(1/2)\label{eq:useprevbounds}\\
&=1
\end{align}
Step \eqref{eq:evenoddbound} uses the fact that odd values of the Hamming distance must be at least one, and non-zero even values of the Hamming distance must be at least $2$. Step \eqref{eq:useprevbounds} applies the bound $q_0\leq 1/4$ which follows from \eqref{eq:genbound} for $m=n-2$, and the bound $q_{\rm odd}\leq 1/2$ which follows from Lemma \ref{lem:parity-constraint}.

Thus $\Delta^{\mathfrak{C},\mbox{\tiny PGM}}\ge (n-m)/2$ in both cases.  Applying Lemma \ref{lem:Deltaup} and substituting this into \eqref{eq:Deltaup} yields
\begin{align}
   \frac{n-m}{2}&\leq \frac{n}{2}\left(1-(2P^{\mathfrak{C}}-1)^2\right)\\
   \implies n(2P^{\mathfrak{C}}-1)^2&\leq m .
\end{align}
Since $P^{\mathfrak{C}}=\frac{1}{n}\sum_{i=1}^n p^{\mathfrak{C}}_i\geq 1/2$ as assumed without loss of generality, for $m\in\{n-1,n-2\}$ we obtain the desired bound
\begin{align}
 P^{\mathfrak{C}}\le \frac12+\frac12\sqrt{\frac{m}{n}}.
\end{align}
\end{proof}
The exact values of $P^{Q,\avg,\opt}_{n,m}$ for $m\in\{n-1,n-2\}$ now follow as a corollary.
\begin{corollary}\label{cor:exact-values}
For \(m\in\{n-1,n-2\}\),
\begin{equation}
    P^{Q,\avg,\opt}_{n,m}
    =\frac12+\frac12\sqrt{\frac{m}{n}} .
    \label{eq:exact-high-rate}
\end{equation}
\end{corollary}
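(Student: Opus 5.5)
The corollary needs two matching inequalities. The upper bound $P^{Q,\avg,\opt}_{n,m}\le \frac12+\frac12\sqrt{m/n}$ for $m\in\{n-1,n-2\}$ follows from Theorem~\ref{thm:main-converse}. That theorem bounds $P^{\mathfrak{C}}$ for every binary $(n,m)$ QRAC $\mathfrak{C}$. Since $P^{Q,\avg,\opt}_{n,m}$ is defined in \eqref{eq:opt-avg-success} as the maximum of $P^{\mathfrak{C}}$ over all such codes, the bound transfers directly to the optimum. One point to note is that this maximum is attained. The feasible set of states and POVMs on a space of dimension at most $2^m$ is compact and the objective is continuous. In any case the supremum obeys the same bound.

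For the matching lower bound I would use the explicit constructions cited in the introduction. For $m=n-1$, I would take Suzuki's $(n,n-1)$ QRAC~\cite{Suzuki2026}. For $m=n-2$, I would take the $(n,n-2)$ QRAC of Akibue et al.~\cite{AkibueRaymondTamakiTeramoto2026}. Each of these is a valid binary $(n,m)$ QRAC in the sense of the QRAC Model section, and each has average success probability exactly $\frac12+\frac12\sqrt{m/n}$. Each therefore witnesses $P^{Q,\avg,\opt}_{n,m}\ge \frac12+\frac12\sqrt{m/n}$. Combining the two inequalities gives \eqref{eq:exact-high-rate}.

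The argument has no real obstacle. The only care needed is to check that the cited constructions match the paper's model: uniform $X$, a uniformly random index $\mathcal{I}$, and states on at most $2^m$ dimensions. One should also check that the reported success figure is the average success probability $P^{\mathfrak{C}}$ as defined here, not a worst-case quantity. I would briefly verify both points against the cited works. The range of $n$ for which each construction is given should also match the range claimed in the corollary, subject to the standing assumption $0<m<n$.
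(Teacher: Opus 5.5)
Your proposal is correct and follows the paper's own proof: the upper bound comes from Theorem~\ref{thm:main-converse}, which holds for every $(n,m)$ code and hence for the maximum, and the matching lower bound comes from Suzuki's $(n,n-1)$ construction and the $(n,n-2)$ construction of Akibue et al. Your remarks that the maximum is attained and that the cited codes should fit the paper's model are harmless additions that the paper leaves implicit.
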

\begin{proof}
The upper bound follows by Theorem~\ref{thm:main-converse}.  For $m=n-1$,
Suzuki's $(n,n-1)$ construction achieves the right-hand side
\cite{Suzuki2026}.  For $m=n-2$ the $(n,n-2)$ construction of
Akibue et al. achieves the right-hand side 
\cite{AkibueRaymondTamakiTeramoto2026}.  
\end{proof}

\emph{Discussion.--}
We determined the precise values of the optimal average success probability for the classes of binary $(n,n-1)$ and $(n,n-2)$ QRACs by proving that the numerically conjectured upper bound \eqref{eq:conjbound} holds for these settings. While we focus on settings where the dimension $D$ of the encoded quantum system is a power of $2$, namely $D=2^m$ for integer $m$, it is noteworthy that none of the three lemmas in our proof require $D$ to be a power of $2$. Also note that \eqref{eq:genbound} can be stated more generally as $\Prb[Y^{\mathfrak{C},\mbox{\tiny PGM}}=X]\le \frac{D}{2^n}$. Following this line of thought, suppose $n=4$ and $D=6$, so that $2=n-2<\log_2(D)=\log_2(6)<n-1=3$. Then in Step \eqref{eq:useprevbounds} we find $\Delta^{\mathfrak{C},\mbox{\tiny PGM}}\geq 2-2(6/16)-1/2=3/4$. Following the subsequent steps yields the bound $P^{Q,\avg,\opt}_{n=4,D=6}\leq \frac{1}{2}+\frac{1}{2}\sqrt{\frac{5}{8}}$ which is strictly smaller (tighter) than the value $\frac{1}{2}+\frac{1}{2}\sqrt{\frac{\log_2(6)}{4}}$ obtained by replacing $n=4, m=\log_2(D)=\log_2(6)$ in the RHS of \eqref{eq:conjbound}. Evidently, the direct logarithmic interpolation of the numerically conjectured bound, which is now proved for $m=\log_2(D)\in\{n-1,n-2\}$, to intermediate values $\log_2(D)\in(n-2,n-1)$ is not always tight, as we see from the $n=4, D=6$ example. In general, while the conjectured bound may hold more broadly, its tightness in all cases where $P^{Q,\avg,\opt}_{n,m}$ is precisely characterized so far, seems a mere coincidence.

\emph{Acknowledgements.--}The authors acknowledge the use of OpenAI's ChatGPT (GPT-5.5) as a research assistant during the discovery of the results presented in this work. Through an iterative, human-directed process, the authors formulated the research objectives, hypotheses, and problem constraints, while ChatGPT was used to generate intermediate calculations. All content is independently verified, revised, and approved by the authors, who take full responsibility for the manuscript.

\bibliographystyle{apsrev4-2}
\bibliography{reference}

\end{document}